\newcommand{\keywords}[1]{\par\addvspace\baselineskip
\noindent\keywordname\enspace\ignorespaces#1}
\begin{document}

\mainmatter  % start of an individual contribution

% first the title is needed
\title{Structural analysis of high-index DAE for process
simulation}

% a short form should be given in case it is too long for the running head
%\titlerunning{Lecture Notes in Computer Science: Authors' Instructions}

% the name(s) of the author(s) follow(s) next
%
% NB: Chinese authors should write their first names(s) in front of
% their surnames. This ensures that the names appear correctly in
% the running heads and the author index.
%
\author{Xiaolin Qin$^{1,4}$%
\thanks{Corresponding author.}%
\and Wenyuan Wu$^{2}$\and Yong Feng$^{1}$\and Greg Reid$^{1,3}$}
\authorrunning{Xiaolin Qin et al.}
% (feature abused for this document to repeat the title also on left hand pages)

% the affiliations are given next; don't give your e-mail address
% unless you accept that it will be published
\institute{$^1$Chengdu Institute of Computer Applications, CAS,
Chengdu 610041, China\\
$^{2}$Chongqing Institutes of Green and Intelligent Technology, CAS,
Chongqing 401122, China\\
$^{3}$Applied Mathematics Department, University of Western Ontario,
London, N6A 5B7, Canada\\
$^{4}$Department of Mathematics, Sichuan University, Chengdu 610064, China\\
 \mailsa \mailsb \mailsc }

%
% NB: a more complex sample for affiliations and the mapping to the
% corresponding authors can be found in the file "llncs.dem"
% (search for the string "\mainmatter" where a contribution starts).
% "llncs.dem" accompanies the document class "llncs.cls".
%

\toctitle{Structural analysis of high-index DAE for process
simulation} \tocauthor{Xiaolin Qin et al.} \maketitle

\begin{abstract}
This paper deals with the structural analysis problem of dynamic
lumped process high-index DAE models. We consider two methods for
index reduction of such models by differentiation: Pryce's method
and the symbolic differential elimination algorithm rifsimp.
Discussion and comparison of these methods are given via a class of
fundamental process simulation examples. In particular, the
efficiency of the Pryce method is illustrated as a function of the
number of tanks in process design.

\keywords{differential algebraic equations, structural analysis,
symbolic differential elimination, fast prolongation, linear
programming problem}
\end{abstract}

\section{Introduction}
Differential-algebraic equations (DAE) systems arise naturally when
modelling many dynamic systems. Dynamic process models and their
properties form the background of any process control activity
including model analysis, model parameter and structure estimation,
diagnosis, regulation or optimal control. In particular, the
structural analysis of dynamic lumped process models forms an
important step in the model building procedure \cite{book1}, and it
is used for the determination of the solvability properties of the
model. Furthermore, the dynamic lumped process models often require
the consistent initial conditions and solution of high-index
differential-algebraic systems.

The index is a notion used in the theory of DAEs for measuring the
distance from a DAE to its related ODE. High-index DAE systems need
prolongation (differentiation) to reveal all the system's
constraints, and to determine consistent initial conditions. The key
steps include identifying all hidden constraints on formal power
series solutions in the neighborhood of a given point, and are
required to prepare the system for numerical integration. So for
such differential systems, prolongation is unavoidable. In the
present work, Pryce developed a Taylor series method based on his
structural analysis method \cite{jour1,jour2} and on Pantelides'
work in \cite{jour3}. Pantelides' method gives a systematic way to
reduce high-index systems of differential-algebraic equations to
lower index, by selectively adding differentiated forms of the
equations already present in the system. It is implemented in
several significant equation-based simulation programs such as
gPROMS \cite{jour10}, Modelica \cite{book2} and EMSO \cite{jour11}.
However, the algorithm can fail in some instances. Pryce's
structural analysis is based on solving an assignment problem, which
can be formulated as an integer linear programming problem. It finds
all the constraints for a large class of ODE using only
prolongation, which can be considered as fast prolongation method.
Corless et al. show Pryce's method can be extended to give a
polynomial cost method for numerical solution of differential
algebraic equations \cite{jour4}. Wu et al. give a differential
algebraic interpretation of Pryce's method for ODE, which
generalizes to a certain class of PDE for finding missing
constraints \cite{proceeding1}. Mani shows how pre-symbolic
simplification can usefully extend the applicability of the Pryce
method on models produced by MapleSim \cite{jour12}.

In \cite{jour6,jour7}, Leitold et al. propose the structural
analysis of process models using their representation graphs for the
determination of the most important solvability property of lumped
dynamic models: the differential index. Their graph-theoretical
method depends on the change in the relative position of
underspecified and overspecified subgraphs and has an effect to the
value of the differential index for complex models. If these
subgraphs move further from their original positions the value of
differential index increases. In this paper, we consider other
approaches for the structural analysis of dynamic lumped process
models for high-index DAE systems. In particular, we consider
Pryce's method and the symbolic differential elimination package
rifsimp. Pryce's method is a robust and reliable method for
remedying the drawback of the approach \cite{jour6,jour7} and doing
so automatically. This is a powerful way to determine the index of
the system, its number of degrees of freedom, and exactly which
components should be given initial values. The key idea is taken
from Pryce's signature-method. The nice feature of the work is a
simple and straightforward method for analysing the structure of a
differential algebraic system.

The rest of this paper is organized as follows. Section 2 describes
Pryce's method and introduces the symbolic differential elimination
package rifsimp in Maple. Section 3 gives the structural analysis of
simple process models using these approaches. Section 4 gives some
experimental results. The final section concludes this paper.

\section{Preliminaries}
In this section, we give a brief review of Pryce's method and some
remarks, and present the symbolic differential elimination with
Maple's rifsimp package.
\subsection{Pryce's method}
We review below the main steps of Pryce's structural analysis and
the corresponding algorithm following \cite{jour1,jour2}. We
consider an input system of $n$ equations $f=0$, where $f=(f_1, f_2
\cdots, f_n)$ in $n$ dependent variables $x_1(t), x_2(t), \cdots,
x_n(t)$.

Step 1. Form the $n \times n$ signature matrix $\Sigma =
\sigma_{ij}$ of the DAE, where
\begin{center}
${\sigma_{ij}}=\begin{cases}
 highest \ order\ of \ derivative \ to \ which \ the\ variable \\
  x_j\ appears\ in\ equation\ f_i, \\
or \ -\infty\ if \ the\ variable\ does\ not\ occur.
\end{cases}$
\end{center}

Step 2. Solve an assignment problem to find a HVT $(highest\ value\
transversal)$, which is a subset of indices $(i, j)$ describing just
one element in each row and each column, such that $\sum
\sigma_{ij}$ is maximized and finite.

Step 3. Determine the offsets of the problem, which are the vectors
$\mathbf{c}= (c_i)_{1\leq i\leq n}, \mathbf{d}=(d_j)_{1\leq j\leq
n}$, the smallest such that $d_j - c_i \geq \sigma_{ij}$, for all
$1\leq i\leq n, 1\leq j\leq n$ with equality on the HVT. This
problem can be formulated as an integer linear programming problem
(LPP) in the variables $\mathbf{c} = (c_1, c_2, \cdots, c_n$) and $
\mathbf{d} = (d_1, d_2, \cdots, d_n)$:
\begin{subequations}
\begin{eqnarray}
&&Minimize \ \ z = \sum_j d_j - \sum_i c_i, \label{equ:a}\\
&&where \ \ d_j - c_i \geq \sigma_{ij} \ for\ all\ (i, j), \label{equ:b}\\
&&c_i \geq 0 \ for\ all\ i.\label{equ:c}
\end{eqnarray}
\end{subequations}

The structural index is then defined as
\begin{center}
$\nu=\max_{i}c_i+\begin{cases}
 0 \ for\ all\ d_j>0 \\
1 \ for\ some\ d_j=0.
\end{cases}$
\end{center}
The structural index is no less that the differential index on first
order DAE.

Step 4. Form the $n\times n$ system Jacobian matrix $\mathbf{J}$
where
\begin{center}
$\mathbf{J}_{ij}=\begin{cases}
 \frac{\partial f_i}{\partial ((d_j-c_i)th\ derivative\ of\ x_j)} \ \ if\ this\ derivative\ is\ present\ in\ f_i  \\
0 \ \ \ \ \ \ \ \ \ \ \ \ \ \ \ \ \ \ \ \  \ \ \ \ \ \ \ \ \ \ \ \ \
\  otherwise.
\end{cases}$
\end{center}

Step 5. Choose a consistent point. If $\mathbf{J}$ is non-singular
at that point, then the solution can be computed with Taylor series
or numerical homotopy continuation techniques in a neighborhood of
that point.
\paragraph{Remark 1.}
The computation of $\mathbf{c}$ and $\mathbf{d}$ only involves the
information on differential order and is consequently very fast in
Step 3 of Pryce's method. This problem is dual to the assignment
problem. The time complexity of the assignment problem can be done
at polynomial cost by using the Hungarian Method \cite{jour8}.
Generally, such problems can be solved very efficiently in practice.
\paragraph{Remark 2.} After we obtain the number of prolongation
steps $c_i$ for each equation from Step 3 of Pryce's method, we can
enlarge the system of equations using $\mathbf{c}$. We assume $c_1
\geq c_2 \geq \cdots \geq c_n$, and let $k_c =\max_i c_i= c_1$,
which is closely related to the index of DAEs. Consider the
equations obtained by taking the $t$-derivative of $f_1^{(0)}=f_1=0$
up to the $c_i$th derivative, $1\leq i \leq n$, that is the
collection
\begin{equation}
 \label{eq:1}
  \left\{ \begin{aligned}
f_1^{(0)}, &f_1^{(1)},& \cdots,& f_1^{(c_1)}  \\
& \vdots && \\
 f_n^{(0)},& f_n^{(1)},& \cdots,&  f_n^{(c_n)}
  \end{aligned} \right\} = 0,
                           \end{equation}
where $^{(l)}$ denotes $d^l/dt^l$ \footnote{$^{(l)}$ is defined by
the same way for the rest of this paper.}. By the definition of
$\sigma_{ij}$ and inequalities (1b), the derivatives of the $x_j$
that occur in equations (2) all lie in this set:
\begin{equation}
\centering \left\{ \begin{aligned}
x_1^{(0)}, &x_1^{(1)}, &\cdots, &x_1^{(d_1)}, \\
&\vdots&& \\
x_n^{(0)}, &x_n^{(1)}, &\cdots, &x_n^{(d_n)}.
\end{aligned}\right.
\end{equation}
Represent (3) as a vector X, then (2) can be written as a system
\begin{equation}\label{eq:triple}
0=F(t, X)=\begin{pmatrix}
 F_{0} (t, X_{0})\\
 F_{1} (t, X_{0}, X_{1})\\
 \vdots \\
 F_{k_d} (t, X_0, X_1, \cdots, X_{k_d-1}, X_{k_d})
 \end{pmatrix},
\end{equation}
where $k_d =\max_j d_j = d_1$, and assume $d_1 \geq d_2 \geq \cdots
\geq d_n$. In particular, for $0\leq i \leq k_c$, $F_i$ has fewer
variables than $F_{i+1}$. The block structure form $B_i (0\leq i <
k_c)$ in the case $c_i=c_{i+1}+1$ is given in Table 1.
\begin{table}[H]
\caption{The triangular block structure of $F$ for the case $c_i= c_{i+1}+1$} % title of Table
\centering % used for centering table
\begin{tabular}{|c |c| c| c| c|} % centered columns (4 columns)
\hline%inserts horizontal lines
$B_0$&$B_1$ &$\cdots$ & $B_{k_c-1}$&$B_{k_c}$\\[0.5ex]
\hline % inserts single horizontal line
$F_1^{(0)}$&$F_1^{(1)}$ &$\cdots $&$F_1^{(c_1-1)}$&$F_1^{(c_1)}$\\
&$F_2^{(0)}$&$\cdots$ &$F_2^{(c_2-1)}$&$F_2^{(c_2)}$\\
& &$\vdots$ &$\vdots$ &$\vdots$ \\
&&$F_n^{(0)}$&$\cdots$&$F_n^{(c_n)}$\\
\hline %inserts single line
\end{tabular}
\label{table:nonlin} % is used to refer this table in the text
\end{table}
\paragraph{Remark 3.}
Fast prolongation produces a simplified system to which a standard
numeric solver can be efficiently applied.

\subsection{Symbolic differential elimination}

Maple's rifsimp package can be used to simplify small- and
middle-scale DAEs, and overdetermined systems of polynomially
nonlinear PDEs or ODEs and inequations to a more useful form
\cite{jour9}. For the DAEs and ODEs the only independent variable is
time. It processes systems of polynomially nonlinear PDEs with
dependent variables $u_1, u_2, \cdots, u_n$, which can be functions
of several independent variables.

The key idea of algorithm is substitution and differential
elimination, which requires a ranking to be defined on the dependent
variables and their derivatives. A basic step of differential
elimination algorithms linearly appearing is to write the system in
solved form with respect to each highest ranked derivative. It is
treated by methods involving a combination of Gr$\ddot{o}$ber bases
and Triangular decompositions. Another key step in such algorithms
is the taking of integrability conditions between equations.

The rifsimp algorithm is essentially an extension of the Gaussian
elimination to DAEs and systems of nonlinear PDEs. It differentiates
the leading nonlinear equations and then reduce them with respect to
the leading linear equations. If zero is obtained, it means that the
equation is a consequence of the leading linear equations. If not,
it means that this equation is a new constraint to the system. This
is repeated until no new constraints are found. See Section 3.3 for
a simple example.

\section{Structural analysis of simple process models}

In this section, we apply the above techniques to structure analysis
of dynamic lumped process models DAE systems. The model is taken
from dynamic process simulation and multi-domain modeling and
simulation of complex systems. Here, the cascade of perfectly
stirred tank reactors yields the basic examples of the paper, see
Fig.1.
\begin{figure}
\centering
\includegraphics[height=4.2cm]{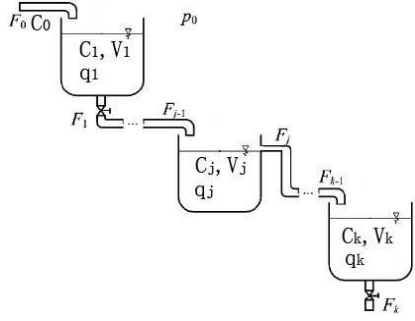}
\caption{Sequence of liquid tanks, where for the $i$-th tank
$F_{i-1}$ and $F_i (i= 1, 2, \cdots, k$) are the inlet and outlet
flow rate, $C_i$ is the concentration and  $V_i$ is the fixed volume
of the $i$-th tank.} \label{fig:example}
\end{figure}
\vspace*{-10pt}
\subsection{Main algorithm}
Suppose a system consists of $k$ perfectly stirred tank reactor. A
feed of concentration $C_0(t)$ is fed into the first tank. The
concentrations in the tanks are described by the following equation:
\begin{equation}\label{eq:orig}
C_{i}^{(1)}=\frac{q(t)}{V_i(t)}(C_{i-1}(t)-C_i(t))\ \   i=1, 2,
\cdots, k
\end{equation}
where $C_i(t)$ is the concentration in the tank $i$, $q(t)$ is the
flow rates from tank to tank and $V_i(t)$ is the fixed volume of the
tank $i$. Thus the flow rates between the tanks $q_{i}(t)$ are all
the same that $q_{i}(t) = q(t) =Q(t)$, where $Q(t)$ is a specified
function of $t$.

In general, there are two different specifications that can be added
to
these equations according to the modelling goal: \\
a) in dynamic simulation studies the feed concentration $C_0(t)$ is
given by $C_0=C0(t)$;\\
b) in dynamic design the product concentration $C_k(t)$ is given by
$C_k(t)=Ck(t)$.

When applied to process system a) and b), the main steps of our
approach are:

Step 1. Construct the original system as follows based on the
equation (\ref{eq:orig}):
\begin{equation}\label{eq:orisys}
\begin{split}
F := [C_i(t)^{(1)} = q(t)(C_{i-1}(t)-C_i(t))/V_i(t), i = 1, 2,
\cdots, k, \\ V_i(t)^{(1)} = 0,  i = 1, 2, \cdots, k, q(t) = Q(t)],
\end{split}
\end{equation}
where $k$ is the number of tanks.

Step 2. Obtain the original condition and add it into $F$. There are two cases:\\
a) in dynamic simulation the tank feed concentration $C_0(t)$ is
given as a function of time then get $2k + 1$ equations in $2k + 1$
unknowns: $C_0(t) = C0(t)$, this is essentially index 1 no matter
what $k$ is, and is a trivial system. Symbolic differential
elimination can be used for case a);\\
b) in dynamic design the product concentration $C_k(t)$ is given as
a function: $C_k(t) = Ck(t)$. It is a nontrivial system, which is
high-index as $k$ increased.

Step 3. Call the Pryce's algorithm of Section 2 to solve the vector
$\mathbf{c}$ and $ \mathbf{d}$, and enlarge the initial system by
fast prolongation. Alternatively symbolic differential elimination
can be used for case b).

Step 4. Check the Jacobian matrix $\mathbf{J}$ with the coefficients
of highest derivatives equations and compute the consistent point.
\paragraph{Remark 4.}Based on the structure analysis of Pryce's method, it is practical
and efficient for dynamic lumped process models DAE systems. In
general, the goal of structural analysis of DAEs is to differentiate
the equations in such a way that the coefficient (Jacobian) matrix
of the highest derivatives is non-singular. It means that some
equations need prolongations on independent variable to balance the
coefficients matrix. So it can computing Jacobian matrix of the
lower derivatives equations by an iterative procedure for finding
all consistent points.
\paragraph{Remark 5.} Compared with the structural analysis of process models using their representation graphs method,
the advantages of our algorithm are:\\
$\bullet$ We efficiently apply the fast numerical and symbolic
computations to a wide variety of
physical models generated by the equation-based technique.\\
$\bullet$ For the large models, we can keep the structural index of
system remaining unchanged. Moreover, the prolongation system has a
favorable block triangular structure to compute the missing initial
conditions more efficiently.

\subsection{Main results}
For the general dynamic lumped process models DAE systems, we can
obtain the offsets of vector
$$\mathbf{c} = (0, 1, 2, \cdots, k-1,
0, 0, 1, 2, \cdots, k)$$
 and
  $$\mathbf{d} = (0, 1, 2, \cdots, k, 1,
1, 2, \cdots, k-1, k-1)$$ by Pryce's method.
Therefore, we have the
following ranking of dependent variables.
\[
\left( {\begin{array}{c}
 C_k
 \end{array} } \right)
 \prec \left( {\begin{array}{c}
 C_{k-1} \\
C_k^{(1)}\\
V_{k-1}\\
V_{k}\\
q
 \end{array} } \right)
\prec \left( {\begin{array}{c}
 C_{k-2} \\
C_{k-1}^{(1)}\\
C_k^{(2)}\\
V_{k-1}^{(1)}\\
V_k^{(1)}\\
q^{(1)}
 \end{array} } \right)
\prec \cdots \prec \left( {\begin{array}{c}
 \\
C_1\\
C_2^{(1)}\\
\vdots\\
C_k^{(k-1)}\\
V_1\\
V_2\\
\vdots\\
V_{k-1}^{(k-3)}\\
V_k^{(k-2)}\\
q^{(k-2)}
 \end{array} } \right)
\prec \left( {\begin{array}{c}
 C_0\\
C_1^{(1)}\\
C_2^{(2)}\\
\vdots\\
C_k^{(k)}\\
V_1^{(1)}\\
V_2^{(1)}\\
\vdots\\
V_{k-1}^{(k-2)}\\
V_k^{(k-1)}\\
q^{(k-1)}
 \end{array} } \right)
\]

Based on the above ranking, we can obtain the sequence of solving
initial value problem for the dynamic lumped process models DAE
systems. It is equivalently the block-triangular system that has
full row rank for each $k$.

From (\ref{eq:orisys}), $F$ and the original condition
$C_k(t)=Ck(t)$ have
$$ M = (\sum
c_i) + (2k+2) = ((\sum_1^{k-1} i) + (\sum_{1}^{k} i)) + (2k + 2) =
k^2+2k+2$$ components. The number of variables is $$ N = (\sum d_j)
+ (2k + 2) = ((\sum_1^{k} j) + 1 + (\sum_1^{k-1} j) + (k-1))+ (2k +
2) = k^2+3k+2.
$$

Considered as $M$ algebraic equations in $N$ variables, it has a
solution $(t^{*}, X^{*})$, where
$$X =(C_0, C_1^{(1)}, C_2^{(2)},
\cdots, C_k^{(k)}, V_1^{(1)}, V_2^{(1)}, \cdots, V_{k-1}^{(k-2)},
V_k^{(k-1)}, q^{(k-1)});$$ and that $\mathbf{J}$ is non-singular.
Therefore, $(t^{*}, X^{*})$ is a consistent point. In a neighborhood
of the point $(t^{*}, X^{*})$, the solution manifold has $D$ degrees
of freedom \cite{jour2}.
\begin{lemma}
At a point $(t^{*}, X^{*})$ in $\emph{M}$ where $\mathbf{J}$ is
non-singular, $\emph{M}$ is locally a manifold of dimension $\pi +
1$ parameterized. The solution manifold has $D$ degrees of freedom,
where
$$D=\pi= \sum d_j - \sum c_i = N - M.$$
\end{lemma}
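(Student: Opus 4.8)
The plan is to treat the prolonged system $F(t,X)=0$ of (4) as $M$ algebraic equations in the $N+1$ coordinates $(t,X)\in\mathbb{R}^{N+1}$ and to apply the implicit function theorem in its submersion form. The dimension count is already essentially combinatorial: the prolonged equations (2) number $\sum_i(c_i+1)=\sum c_i+n=M$, the derivatives (3) that can occur number $\sum_j(d_j+1)=\sum d_j+n=N$, and hence $N-M=\sum d_j-\sum c_i=\pi$. So everything reduces to one rank statement: if $\partial F/\partial X$ has full row rank $M$ at $(t^*,X^*)$, then the $M\times(N+1)$ Jacobian $[\,\partial F/\partial t\mid\partial F/\partial X\,]$ has rank $M$ as well, the common zero set is locally a smooth embedded manifold of codimension $M$, and its dimension is $(N+1)-M=\pi+1$. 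The extra $+1$ is the independent variable $t$, which moves freely along the flow; fixing $t=t^*$ restricts the manifold to the fiber of consistent initial data, of dimension $\pi=D$, which is the number of degrees of freedom.

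First I would verify the clean part of the rank statement, namely the top level of the block structure of Table 1. Because $d_j-c_i\geq\sigma_{ij}$ with equality on the HVT, differentiating $f_i$ exactly $c_i$ times produces $f_i^{(c_i)}$ whose highest-order occurrence of $x_j$ is $x_j^{(\sigma_{ij}+c_i)}=x_j^{(d_j)}$ exactly when $\sigma_{ij}=d_j-c_i$, with coefficient $\partial f_i^{(c_i)}/\partial x_j^{(d_j)}=\partial f_i/\partial x_j^{(d_j-c_i)}=\mathbf{J}_{ij}$; for every lower prolongation $f_i^{(m)}$, $m<c_i$, the top derivative $x_j^{(d_j)}$ cannot appear, since its presence would force $m\geq c_i$. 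Hence the top variables $x_j^{(d_j)}$ occur only in the top equations $f_i^{(c_i)}$, and the corresponding $n\times n$ block of $\partial F/\partial X$ is exactly the system Jacobian $\mathbf{J}$, non-singular by hypothesis. This shows the $n$ top rows are independent and not spanned by the remaining rows.

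The hard part will be propagating this downward. I would argue by induction on the block index, peeling off one level at a time: after removing the top equations and top variables one is left with a prolonged system of the same shape, whose new top block is the submatrix $\mathbf{J}[R_s,C_s]$ obtained by keeping the rows $R_s=\{i:c_i\geq s\}$ and the columns $C_s=\{j:d_j\geq s\}$. Any HVT $\tau$ with nonzero product in $\mathbf{J}$ satisfies $\tau(R_s)\subseteq C_s$, since $d_{\tau(i)}=c_i+\sigma_{i,\tau(i)}\geq c_i\geq s$, so each such block is wide and carries a full partial transversal. The genuine obstacle is to upgrade this to full row rank of $\mathbf{J}[R_s,C_s]$ for every $s$; this does not follow from $\det\mathbf{J}\neq0$ by the transversal alone and requires the special properties of the canonical (minimal) offsets, which is precisely what is established in Pryce's solvability analysis \cite{jour2}. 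Granting it, each diagonal block has trivial left null space, the induction closes, and $\partial F/\partial X$ has full row rank $M$.

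Finally I would assemble the pieces: full row rank $M$ of $\partial F/\partial X$ gives, via the implicit function theorem, a local manifold of dimension $(N+1)-M=\pi+1$ through $(t^*,X^*)$, parameterized by $t$ together with the $\pi$ lower-order variables that survive the elimination; the identity $D=\pi=\sum d_j-\sum c_i=N-M$ then reads off the degrees of freedom and completes the proof, recovering the local solvability conclusion of Pryce's structural analysis.
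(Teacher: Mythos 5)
The paper does not actually prove this lemma: it is imported wholesale from Pryce's structural analysis \cite{jour2}, and the surrounding text merely applies it to the tank cascade. So your attempt should be measured against Pryce's own argument, and in outline it follows exactly that route: count the prolonged equations ($M=\sum_i(c_i+1)=\sum c_i+n$) and the admissible derivatives ($N=\sum_j(d_j+1)=\sum d_j+n$), exploit the block-triangular structure of Table 1, identify the Jacobian of the top block of $\partial F/\partial X$ with the system Jacobian $\mathbf{J}$, and apply the implicit function theorem to get a local manifold of dimension $(N+1)-M=\pi+1$, the extra $1$ being the coordinate $t$. Your verification that $x_j^{(d_j)}$ cannot occur in $f_i^{(m)}$ for $m<c_i$, and that $\partial f_i^{(c_i)}/\partial x_j^{(d_j)}=\mathbf{J}_{ij}$, is correct.

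The one miscalibration is where you declare the ``genuine obstacle'': you claim that full row rank of the sub-blocks $\mathbf{J}[R_s,C_s]$, with $R_s=\{i:c_i\geq s\}$ and $C_s=\{j:d_j\geq s\}$, does not follow from $\det\mathbf{J}\neq0$ and needs special (minimality) properties of the canonical offsets, and you then defer it to \cite{jour2}. In fact it follows from $\det\mathbf{J}\neq0$ in one line, for any offsets: if $i\in R_s$ and $j\notin C_s$, then $d_j-c_i<0$, and a derivative of negative order cannot be present in $f_i$, so $\mathbf{J}_{ij}=0$ by the definition in Step 4. Hence every row of $\mathbf{J}$ indexed by $R_s$ is supported entirely on the columns $C_s$, so any vanishing linear combination of the rows of $\mathbf{J}[R_s,C_s]$ would give a vanishing combination of full rows of $\mathbf{J}$, contradicting non-singularity. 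Minimality of $(\mathbf{c},\mathbf{d})$ plays no role in this rank statement. Substituting this observation for your appeal to \cite{jour2} closes your induction and makes the proof complete and self-contained --- which is more than the paper itself provides for this lemma.
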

The above shows that if we find a solution $(t^{*}, X^{*})$, this is
a consistent point, and if the number of degrees of freedom $D > 0$
there are other consistent points nearby for the same $t$.
\begin{theorem}
The general dynamic lumped process models DAE systems have degrees
of freedom $D = \sum d_j- \sum c_i = k$, where k is the number of
tanks. The structural index is $k+1$.
\end{theorem}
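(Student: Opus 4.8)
\noindent\emph{Proof (plan).}\quad The plan is to derive both assertions — the number of degrees of freedom $D=k$ and the structural index $\nu=k+1$ — directly from the offset vectors $\mathbf{c}$ and $\mathbf{d}$ displayed above: $D$ will come from Lemma~1, and $\nu$ from the structural-index formula of Step~3 of Pryce's method. Since the offsets together with the counts $M=k^2+2k+2$ and $N=k^2+3k+2$ are already in hand from the preceding discussion, the argument reduces to two short computations plus one structural verification of the offsets themselves.

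First I would compute $D$. By Lemma~1, at the consistent point $(t^*,X^*)$ where $\mathbf{J}$ is non-singular one has $D=\pi=\sum_j d_j-\sum_i c_i=N-M$. Summing the arithmetic progressions appearing in $\mathbf{c}$ and $\mathbf{d}$ gives
$$\sum_i c_i=\sum_{i=1}^{k-1} i+\sum_{i=1}^{k} i=\frac{(k-1)k}{2}+\frac{k(k+1)}{2}=k^2,$$
$$\sum_j d_j=\sum_{j=1}^{k} j+1+\sum_{j=1}^{k-1} j+(k-1)=\frac{k(k+1)}{2}+\frac{(k-1)k}{2}+k=k^2+k.$$
Hence $D=(k^2+k)-k^2=k$, equivalently $D=N-M=(k^2+3k+2)-(k^2+2k+2)=k$, as claimed.

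Next I would read off $\nu$ from $\nu=\max_i c_i+\epsilon$, where $\epsilon=1$ if $d_j=0$ for some $j$ and $\epsilon=0$ otherwise. The largest entry of $\mathbf{c}$ sits in its final block and equals $k$, so $\max_i c_i=k$; the leading entry of $\mathbf{d}$ is $d_1=0$, so the branch ``$d_j=0$ for some $j$'' applies and $\epsilon=1$. Therefore $\nu=k+1$.

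The genuinely substantive step — and the one I expect to be the main obstacle — is confirming that $\mathbf{c}$ and $\mathbf{d}$ are the correct offsets for \emph{every} $k$, i.e.\ that Pryce's method really returns these two formulas as a family. This means writing the $(2k+2)\times(2k+2)$ signature matrix $\Sigma$ of the enlarged system and pinning down the HVT. The columns for the feed variable $C_0$, the output constraint $C_k=Ck(t)$ and the flow equation $q=Q(t)$ each admit a single admissible entry, forcing $f_1\to C_0$, the constraint $\to C_k$ and $q=Q(t)\to q$; each equation $V_i^{(1)}=0$ forces its assignment to $V_i$ with value $1$; and to cover the columns $C_1,\dots,C_{k-1}$ the chain equations are then forced onto the shifted diagonal $f_i\to C_{i-1}$ with value $0$. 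One checks that this transversal has maximal finite value $k$ and that $\mathbf{c},\mathbf{d}$ are the smallest non-negative integer vectors with $d_j-c_i\geq\sigma_{ij}$ for all $(i,j)$ and equality along it. Granting this verification, together with the non-singularity of $\mathbf{J}$ required by Lemma~1, the two displayed computations complete the proof.
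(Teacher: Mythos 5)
Your proposal is correct and takes essentially the same route as the paper: both proofs plug the stated offset vectors into Lemma~1 to get $D=\sum d_j-\sum c_i=(k^2+k)-k^2=k$ (equivalently $N-M=k$), and then read off the structural index $\nu=\max_i c_i+1=k+1$ from $\max_i c_i=k$ together with $d_1=0$. Your additional sketch verifying that $\mathbf{c},\mathbf{d}$ really are the Pryce offsets for every $k$ (via the forced transversal of value $k$) addresses a point the paper leaves implicit — it simply asserts the offsets before the theorem — so it strengthens rather than changes the argument, though note the forcing for the constraint $C_k=Ck(t)$ and for $q=Q(t)$ comes from their \emph{rows} having a single finite entry, not their columns.
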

\begin{proof}
From Lemma 1, we have degrees of freedom
\begin{equation}\label{eq:dof}
\begin{split}
D = \sum d_j -\sum c_i =
(\sum_{1}^{k}j)+1+(\sum_{1}^{k-1}j)+k-1-\\((\sum_{1}^{k-1}j)+(\sum_{1}^{k-2}j)+k-1+k)
= k+1+k-1+0-k = k.
\end{split}
\end{equation}
Because the $d_1=0$, the structural index is $\max_i{(c_i)}+1=k+1$.
\end{proof}
Here, we give the degrees of freedom and structural index of the
general dynamic lumped process models DAE systems that is a function
of the number of tanks $k$.
\subsection{A detailed example}
\newcounter{num1}
\begin{example}We propose a simple example to set the number of tanks $k:=3$ case
a) to illustrate the rifsimp algorithm.
\begin{list}{Step \arabic{num1}:}{\usecounter{num1}\setlength{\rightmargin}{\leftmargin}}
\item Construct the original system as follows:
\begin{eqnarray*}sys := [C_1(t)^{(1)} =
\frac{q(t)*(C_0(t)-C_1(t))}{V_1(t)}, C_2(t)^{(1)}  =
\frac{q(t)*(C_1(t)-C_2(t))}{V_2(t)},\\ C_3(t)^{(1)}  =
\frac{q(t)*(C_2(t)-C_3(t))}{V_3(t)}, V_1(t)^{(1)}  = 0, V_2(t)^{(1)}
= 0, V_3(t)^{(1)}  = 0, q(t) = Q(t)];
\end{eqnarray*}
\item Obtain the original condition $C_0(t) = C0(t)$, and add it to $sys$;
\item Call rifsimp algorithm to reduce the system as follows:
\begin{eqnarray*}
[C_1(t)^{(1)}  = \frac{Q(t)*C0(t)-Q(t)*C_1(t)}{V_1(t)},
C_2(t)^{(1)}  = \frac{Q(t)*C_1(t)-Q(t)*C_2(t))}{V_2(t)},\\
C_3(t)^{(1)}  = \frac{Q(t)*C_2(t)-Q(t)*C_3(t)}{V_3(t)}, V_1(t)^{(1)}
= 0, V_2(t)^{(1)}  = 0,\\ V_3(t)^{(1)}  = 0, C_0(t) = C0(t),
 q(t) = Q(t), V_1(t) \neq 0, V_2(t) \neq 0, V_3(t) \neq 0].
\end{eqnarray*}
\end{list}
\end{example}
\paragraph{Remark 6.} In this paper,
we consider the modelling goal for case a) by the rifsimp algorithm.
The main reason is the specific structure of models, which is the
quasi-triangular system and has $C_0(t) = C0(t)$ specified.
Therefore, it is only simple check. But it becomes rapidly more
complicated as the number $k$ increased for case b).
\newcounter{num2}
\begin{example}We propose a simple example to set the number of tanks $k:=4$ case
b) and illustrate our algorithms.
\begin{list}{Step \arabic{num2}:}{\usecounter{num2}\setlength{\rightmargin}{\leftmargin}}
\item Construct the original system as follows:
\begin{eqnarray*}sys := [D_1=C_1(t)^{(1)} -
\frac{q(t)*(C_0(t)-C_1(t))}{V_1(t)}=0, D_2=C_2(t)^{(1)}
\\-\frac{q(t)*(C_1(t)-C_2(t))}{V_2(t)}=0, D_3= C_3(t)^{(1)} -
\frac{q(t)*(C_2(t)-C_3(t))}{V_3(t)}=0, D_4=C_4(t)^{(1)} \\ -
\frac{q(t)*(C_3(t)-C_4(t))}{V_4(t)}=0, D_5=V_1(t)^{(1)}  = 0,
D_6=V_2(t)^{(1)}  = 0,\\ D_7=V_3(t)^{(1)}  = 0,D_8=V_4(t)^{(1)}  =
0, D_9=q(t) - Q(t)=0];
\end{eqnarray*}
\item Obtain the original condition $C_4(t) = C4(t)$, and add $D_{10}=C_4(t) - C4(t)=0$ to $sys$;
\item Obtain the variables list $\mathit{variables} := [{C_{0}},
\,{C_{1}}, \,{C_{2}}, \,{C_{3}}, \,{C_{4}}, \,{V_{1}}, \,{V_{2}},
\,{V_{3}}, \,{V_{4}}, \,q]$;
\item Call the Pryce's method and solving this integer LPP by LPSolve
in the Optimization package of Maple, we obtain the fast
prolongation times for the $i$-th equation from $\mathbf{c}$, and
the highest order of derivative variables
from $\mathbf{d}$ as follows:\\
$c_1=0, c_2=1, c_3=2, c_4=3, c_5=0, c_6=0, c_7=1, c_8=2, c_9=3, c_{10}=4,\\
d_1=0, d_2=1, d_3=2, d_4=3, d_5=4, d_6=1, d_7=1, d_8=2, d_9=3,
d_{10}=3$. \\
Therefore, according to the $c_i$ values it can be prolonged for the
corresponding equations automatically. Enlarged sets of variables:\\
$\{C_0; C_1, C_1^{(1)}; C_2, C_2^{(1)}, C_2^{(2)}; C_3, C_3^{(1)},
C_3^{(2)}, C_3^{(3)}; C_4, C_4^{(1)}, C_4^{(2)}, C_4^{(3)},
C_4^{(4)};\\ V_1, V_1^{(1)}; V_2, V_2^{(1)}; V_3, V_3^{(1)},
V_3^{(2)}; V_4, V_4^{(1)}, V_4^{(2)}, V_4^{(3)}; q, q^{(1)},
q^{(2)}, q^{(3)}\},$\\ equations: \\$\{D_1; D_2, D_2^{(1)}; D_3,
D_3^{(1)}, D_3^{(2)}; D_4, D_4^{(1)}, D_4^{(2)}, D_4^{(3)}; D_5;
D_6; D_7, D_7^{(1)}; D_8, D_8^{(1)}, D_8^{(2)};\\ D_9, D_9^{(1)},
D_9^{(2)}, D_9^{(3)}; D_{10}, D_{10}^{(1)}, D_{10}^{(2)},
D_{10}^{(3)}, D_{10}^{(4)}\}.$ The system Jacobian $\mathbf{J}$ is:
\[
 \mathbf{J} :=  \left[ {\begin{array}{c}
 - {\displaystyle \frac {\mathrm{q}(t)}{{V_{1}}(t)}} \, \ \ \ \ \ \ \,1\, \
 \ \ \ \ \
\,0\, \ \ \ \ \ \,0\,\ \ \  \ \,\ 0\, \,\ \ \ \ \ 0\,\ \ \ \ \
\,0\,\
\ \ \ \ \ \,0\, \ \ \ \ \ \ \ \ \,0\, \ \ \ \ \ \ \,0 \\
 [2ex] 0\, \, \ -
{\displaystyle \frac {\mathrm{q}(t)}{{V_{2}}(t)}} \, \ \  \,1\, \ \
\ \,0\, \ \ \ \,0\, \ \ \ \ \,0\, \ \ \ \,{\displaystyle \frac
{\mathrm{q}(t) \,({C_{1}}(t) - {C_{2}}(t))}{{V_{2}}(t)^{2}}} \, \ \
\
   0\, \ \ \  \,0\, \ \ \ \ \ \,0
\\
[2ex] 0\, \ \ \ \,0\, \ \, - {\displaystyle \frac
{\mathrm{q}(t)}{{V_{3}}(t)} } \, \ \ \  \,1\, \ \ \,0\, \ \ \ \
\,0\, \ \ \ \,0\, \ \ \,{\displaystyle \frac {
\mathrm{q}(t)\,({C_{2}}(t) -
{C_{3}}(t))}{{V_{3}}(t)^{2}}}\, \ \ \ \ \, 0\, \ \ \ \ \,0 \\
[2ex] 0\, \ \,0\, \ \,0\, \ \, - {\displaystyle \frac
{\mathrm{q}(t)}{{V_{ 4}}(t)}} \, \,1\, \ \,0\, \ \,0\, \ \,0\, \
\,{\displaystyle \frac { \mathrm{q}(t)\,({C_{3}}(t) -
{C_{4}}(t))}{{V_{4}}(t)^{2}}} \, \,
 - {\displaystyle \frac {{C_{3}}(t) - {C_{4}}(t)}{{V_{4}}(t)}}
 \\ [2ex]
0\, \ \ \ \ \ \,0\, \ \ \ \ \ \ \ \,0\, \ \ \ \ \ \ \,0\, \ \ \ \ \
\ \ \,0\, \ \ \ \ \ \ \,1\, \ \ \ \ \ \ \ \ \,0\, \ \ \ \ \ \ \,0\,
\ \ \ \ \ \ \ \ \,0\, \ \ \ \ \ \ \ \,0
 \\
0\, \ \ \ \ \ \,0\, \ \ \ \ \ \ \ \,0\, \ \ \ \ \ \ \,0\, \ \ \ \ \
\ \ \,0\, \ \ \ \ \ \ \,0\, \ \ \ \ \ \ \ \ \,1\, \ \ \ \ \ \ \,0\,
\ \ \ \ \ \ \ \ \,0\, \ \ \ \ \ \ \ \,0
 \\
0\, \ \ \ \ \ \,0\, \ \ \ \ \ \ \ \,0\, \ \ \ \ \ \ \,0\, \ \ \ \ \
\ \ \,0\, \ \ \ \ \ \ \,0\, \ \ \ \ \ \ \ \ \,0\, \ \ \ \ \ \ \,1\,
\ \ \ \ \ \ \ \ \,0\, \ \ \ \ \ \ \ \,0
 \\
0\, \ \ \ \ \ \,0\, \ \ \ \ \ \ \ \,0\, \ \ \ \ \ \ \,0\, \ \ \ \ \
\ \ \,0\, \ \ \ \ \ \ \,0\, \ \ \ \ \ \ \ \ \,0\, \ \ \ \ \ \ \,0\,
\ \ \ \ \ \ \ \ \,1\, \ \ \ \ \ \ \ \,0
 \\
0\, \ \ \ \ \ \,0\, \ \ \ \ \ \ \ \,0\, \ \ \ \ \ \ \,0\, \ \ \ \ \
\ \ \,0\, \ \ \ \ \ \ \,0\, \ \ \ \ \ \ \ \ \,0\, \ \ \ \ \ \ \,0\,
\ \ \ \ \ \ \ \ \,0\, \ \ \ \ \ \ \ \,1
 \\
0\, \ \ \ \ \ \,0\, \ \ \ \ \ \ \ \,0\, \ \ \ \ \ \ \,0\, \ \ \ \ \
\ \ \,1\, \ \ \ \ \ \ \,0\, \ \ \ \ \ \ \ \ \,0\, \ \ \ \ \ \ \,0\,
\ \ \ \ \ \ \ \ \,0\, \ \ \ \ \ \ \ \,0
\end{array}}
 \right]
\]
\item Computing the Jacobian matrix $\mathbf{J}=
-\frac{q(t)^4}{V_1(t)V_2(t)V_3(t)V_4(t)}$, which is non-singular.
And then we can compute the consistent point by numerical methods,
such as Taylor series methods, Homotopy methods.
\end{list}
\end{example}
\paragraph{Remark 7.} In particular, we can obtain the coefficient
(Jacobian) matrix that is sparse dramatically. For the highest
derivatives, the determinant of Jacobian matrix is  $det \
J=-\frac{q(t)^k}{V_1(t)V_2(t)\cdots V_k(t)}$ where $k$ is the number
of tanks.
\section{Experimental Results}

An efficient practical implementation of Pryce's method is in
\emph{Maple}. The following examples run in the platform of Maple
and Inter(R) Core(TM) i3 2.40GHz, 2.00G RAM. We give some
experimental results using symbolic differential elimination and
fast prolongation for structural analysis of dynamic lumped process
models DAE systems. In Fig. $2$, we present the time for symbolic
differential elimination by rifsimp of Maple package and fast
prolongation as the number of tank reactors $k$ increased. In Fig.
$3$, we present the memory usage for symbolic differential
elimination by rifsimp of Maple package and fast prolongation as the
number of tank reactors $k$ increased.

The system Jacobian is very sparse for case b). Its determinants are
evaluated symbolically to be $det \
J=-\frac{q(t)^k}{V_1(t)V_2(t)\cdots V_k(t)}$ where $k$ is the number
of tanks, and is non zero. In other examples the alternative is to
usually find an approximate point satisfying the constraints by
numerical method (eg. Homotopy method) and evaluate the condition
number of the Jacobian to carry out he validation.
 \vspace*{-10pt}
\begin{figure}[H]
\centering
\includegraphics[height=6.2cm]{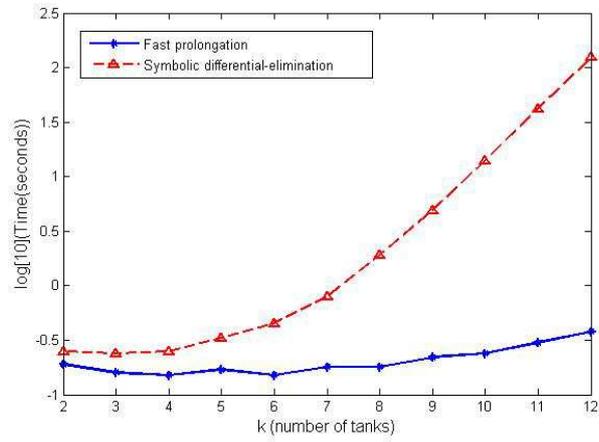}
\caption{Time for structural analysis of dynamic lumped process
models DAE systems using symbolic differential elimination and fast
prolongation.} \label{fig:example}
\end{figure}
\vspace*{-10pt}
\begin{figure}[H]
\centering
\includegraphics[height=6.2cm]{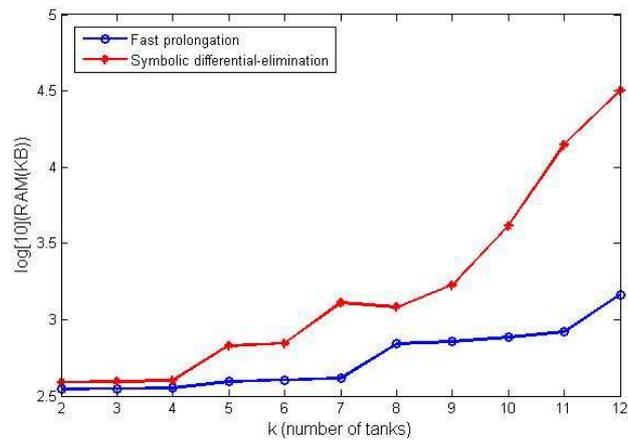}
\caption{Memory usage for structural analysis of dynamic lumped
process models DAE systems using symbolic differential elimination
and fast prolongation.} \label{fig:example}
\end{figure}
\vspace*{-10pt}

From Figures $2$ and $3$:\\
$\bullet$ The time of structural analysis of dynamic lumped process
models DAE systems for fast prolongation is small and ultimately
grows slowly in the range of degrees of freedom considered. The time
for symbolic differential elimination method grows much faster. The
main reason is that fast prolongation only needs to solve an integer
linear programming problem, but the symbolic differential
elimination needs a large number of eliminations and differentiates.
Therefore, the symbolic differential elimination is more difficult
for the general high-index DAE
systems. \\
$\bullet$ The memory shows steady growth as the number $k$
increases. The memory usage of symbolic differential elimination
grows very quickly.

The above analysis and experimental results, motivates consideration
of hybrid techniques involving a combination of symbolic
differential elimination and fast prolongation for large DAE models.
However, symbolic computations have the disadvantage of intermediate
expression swell. In the future, we would like to consider a
combination of partial symbolic differential elimination and fast
prolongation to model and simulate realistic physical models. We
hope to give the specific structural analysis algorithms that
exploit the form of systems appearing in applications.

\section{Conclusion}
In this paper, we have investigated the high-index structural
analysis problem for the class of dynamic lumped process models DAE
systems by Pryce's method and symbolic differential elimination. We
designed the algorithm to automatically analysis the structural of
simple process models, and showed that the rifsimp algorithm of
Maple package reduces the original system to standard form. We also
gave the degrees of freedom and structural index of the dynamic
lumped process models DAE systems that is a function of the number
of tanks $k$. Moreover, those approached can be generalized to a
wide variety of physical models and analyzed the structural of
square and non-square nonlinear DAE and PDAE systems.
\subsubsection*{Acknowledgments.} This work was partially supported by the National
Basic Research Program of China (2011CB302402), the West Light
Foundation of the Chinese Academy of Sciences, the National Natural
Science Foundation of China (Grant NO. 91118001, 11171053), and the
Chinese Academy of Sciences Visiting Professorship for Senior
International Scientists(Grant No.2010T2G31).

\end{document}